\documentclass[a4paper,11pt,oneside,english]{amsart}

\usepackage[latin9]{inputenc}
\usepackage{units}
\usepackage{amstext}
\usepackage{amsthm}
\usepackage{amssymb}
\usepackage{babel}
\usepackage{algorithm}
\usepackage{algpseudocode}
\usepackage{csquotes}

\usepackage[
  backend=biber,
  style = apa,
  maxcitenames=2,
  url=false,
  doi=true,
  dashed=false % don't use em-dash for repeated authors
]{biblatex}
\AtEveryBibitem{%
  \clearfield{note}%
  \clearfield{urldate}%
  \clearfield{accessed}%
  \clearfield{issn}
}
\DeclareFieldFormat{apacase}{#1}

\numberwithin{equation}{section}
\numberwithin{figure}{section}
\theoremstyle{definition}
\newtheorem{defn}{\protect\definitionname}
\theoremstyle{plain}
\newtheorem{prop}{\protect\propositionname}
\theoremstyle{plain}
\newtheorem{cor}{\protect\corollaryname}
\makeatother
\providecommand{\corollaryname}{Corollary}
\providecommand{\definitionname}{Definition}
\providecommand{\propositionname}{Proposition}

\newcommand\thankssymb[1]{\textsuperscript{\@fnsymbol{#1}}}

\begin{document}
\title{Observable consequences of mental accounting}

\author{Laura Blow \and Ian Crawford }

\thanks{Blow: School of Economics, University of Surrey, Guildford, Surrey, GU2 7XH, United Kingdom $\&$ IFS. l.blow@surrey.ac.uk. 
Crawford: Department of Economics, University of Oxford, Manor Road, Oxford, OX1 3UQ, United Kingdom; Nuffield College, New Road, Oxford, OX1 1NF, United Kingdom. ian.crawford@economics.ox.ac.uk.}

\begin{abstract}
We derive necessary and sufficient nonparametric conditions for several models of mental accounting. The paper characterises pure mental accounting, separable accounts, and labelled income, and compares these boundedly rational models with two rational multi-stage budgeting benchmarks. The resulting Afriat-style conditions make the observable implications of mental accounting explicit and refutable. In this sense, mental accounting is treated not as a loose description of behaviour, but as a formally refutable hypothesis about the organisation of consumption.
\end{abstract}

\maketitle
\section{Introduction}

Mental accounting is a broad notion which can encompass many kinds of mechanism used by individuals and households to organise, evaluate, and keep track of financial activities.  The existing literature encompasses both descriptive accounts of household budgeting and formal structural theories of particular mechanisms. This paper takes a different route: it studies a specific and economically tractable form of mental accounting: the assignment of consumption expenditures and income flows to latent accounts within a revealed-preference framework. We ask whether an observed sequence of prices and consumption choices can be rationalised as if the consumer were operating a system of mental accounts, and we derive necessary and sufficient nonparametric conditions for several variants of that idea, including pure mental accounting, separable accounts, labelled income, and rational two-stage budgeting. The result is a family of Afriat-style characterisations that make mental accounting empirically refutable, while also clarifying how standard utility maximisation and multi-stage budgeting appear as limiting cases.

The paper makes three contributions. First, it provides a revealed-preference characterisation of pure mental accounting with hidden partitions of goods into accounts, showing that the model admits a finite set of testable inequalities and nests GARP as a special case. Second, it extends the analysis to separable accounts and to labelled income, giving corresponding necessary and sufficient conditions for each structure. Third, it characterises weakly separable and additively separable two-stage budgeting as benchmark models, so that the observable implications of mental accounting can be compared directly with standard multi-stage consumer theory.

The paper sits at the intersection of three literatures. The first is the revealed-preference and nonparametric demand literature initiated by Afriat's Theorem (\cite{afriat1967construction} and developed further by \cite{Diewert1973}, and  \cite{varianNonparametricApproachDemand1982}), which shows how rationality can be characterised by a finite system of inequalities rather than by an assumed functional form. That approach is especially well suited to the present setting, where the object is not to estimate a structural utility function but to ask whether observed choices can be rationalised by a particular class of hidden budgeting rules. 

The second literature concerns multi-stage budgeting and separability, beginning with \cite{Strotz1955} and extended by \cite{Gorman_1959}. Strotz's discussion of budgeting across branches of expenditure and Gorman's characterisation of weak separability make clear that consumer choice may be organised through intermediate decision layers, and that such structure has testable implications for demand. The present paper builds on that tradition, but uses it to formulate a set of hidden-account models in which the relevant partitions are not observed and must therefore be inferred from the data. 

The third literature is based on Thaler's work on mental accounting (\cite{Thaler1999}). Thaler's account emphasises that households do not necessarily evaluate all expenditures within a single unified budget, but may instead organise spending into psychologically meaningful accounts. This paper takes that behavioural idea and places it inside the revealed-preference apparatus: rather than treating mental accounting as a loose descriptive label, it derives nonparametric necessary and sufficient conditions under which observed prices and quantities can be rationalised as if the consumer were operating with pure accounts, separable accounts, or labelled income. Recent work has provided formal mechanisms through which mental or personal budgets may arise from well-specified behavioural mechanisms. \cite{Koszegi_and_Matejka} derive category budgets from costly attention: when goods have a nested substitutability structure, consumers may attend to within-category relative shocks while holding total category spending fixed. \cite{Galperti_2019} instead gives a commitment-based theory in which present-biased consumers facing uncertainty about intratemporal trade-offs between current consumption goods may optimally impose good-specific spending caps, sometimes alongside a minimum-savings rule. These papers provide structural explanations for why consumers may use budgets---and in doing so necessarily impose structure on preferences, information, or commitment possibilities. Our approach is different but complementary: we do not take a stand on the mechanism by which accounts are formed, but instead derive nonparametric revealed-preference conditions under which observed choices are consistent with latent account-specific constraints.

In what follows, Proposition 1 characterises pure mental accounting and shows that hidden budget accounts relax global rationality while remaining testable for fixed intermediate numbers of groups. Proposition 2 considers separable accounts, where the account partition also defines independent preference domains and hence requires GARP within each account. Proposition 3 characterises labelled income, where labels act as soft earmarks and a multiplier captures whether the label is behaviourally binding. Propositions 4 and 5 then describe rational two-stage budgeting benchmarks, in which the allocation of expenditure across groups must itself satisfy upper-stage optimality restrictions.

\section{Boundedly Rational Mental Accounting}

In this section we characterise some boundedly rational mental accounting models, including pure accounts, separable accounts, and labelling. We suppose that we observe, for an individual agent, a sequence of consumption decisions and the prices at which she transacted $\left\{( \mathbf{p}_{t},\mathbf{q}_{t})\right\}
_{t=1,...,T}$.  The total budget is denoted by $x_t=\mathbf{p}_{t}\cdot \mathbf{q}_{t}$. We assume throughout that the unit of time over which the accounts are operated correspond to the definition of a period/observation in our data. This might not be the case if food budgets (for instance) are balanced weekly, but clothing budgets are balance monthly. 

\subsection{Pure mental accounting}

The individual has standard preferences over a $K-$vector of goods described by $u(\mathbf{q})$. These preferences may be separable but there is no presumption that this is the case.  Her budget is divided into several $(M)$ accounts $\left\{ x_{t}^{m}\right\} _{t=1,...,T}^{m=1,...,M}$ which must sum to the total available budget. We take it that the ``last'' budget is a residual (petty cash) account: $x_{t}^{M}=x_{t}-\sum_{m=1}^{M-1}x_{t}^{m}$. The balance of each account may vary both between accounts and over time. Each item in the commodity vector is allocated to a specific account; thus we denote the vector of items in the $m$'th account by $\mathbf{q}^{m}$. The number of items in each account may differ across accounts. These groups of items form a \textit{partition} of the commodity vector: $\mathbf{q}=[%
\mathbf{q}^{1},...,\mathbf{q}^{M}]$. The allocation of items to accounts does not vary over time (e.g. food items are always assigned to the food account).

The optimising model is
\[
\max_{\mathbf{q}}u(\mathbf{q}) \text{ subject to } \mathbf{p}_{t}^{1}\cdot\mathbf{q}^{1}\leq x_{t}^{1},...,\mathbf{p}_{t}^{M}\cdot\mathbf{q}^{M}\leq x_{t}-\sum_{m=1}^{m=M-1}x_{t}^{m}
\]
We suppose that only the prices and the commodity vectors are observed. In particular, neither the allocation of items to accounts nor the account balances themselves ($\left\{ x_{t}^{m}\right\} _{t=1,...,T}$ ) are observed.

\begin{defn}
A mental accounting model with $M$ accounts rationalises the data $\left\{( 
\mathbf{p}_{t},\mathbf{q}_{t})\right\} _{t=1,...,T}$ if there exists a concave, continuous, monotonic, function $u$ and a set of accounts $%
\{x_{t}^{1},...,x_{t}^{M}\}_{t=1,...,T}$ such that $u(\mathbf{q}_{t})\geq u(%
\mathbf{q})$ for all $\mathbf{q}$ such that $\mathbf{p}_{t}^{m}\cdot \mathbf{q}^{m}\leq x_{t}^{m}$ and $\sum_{m=1}^{m=M}x_{t}^{m}=x_{t}$ for all $t$  and $m$
\end{defn}

This is a statement of the principle of revealed preference: that the utility function must assign a higher value to the selected bundle than any other which satisfies the accounts and is affordable. The conditions for this model are given in the following Proposition.

\begin{prop}
(Pure mental accounting). The following statements are equivalent.\newline
1. There exists a concave, continuous, monotonic utility function and a set
of $M$ accounts which rationalise the data.\newline
2. There exists a partition of items into $M$ groups such that there exists
a set of real numbers $\{U_{t},\lambda_{t}^{1},...,\lambda_{t}^{M}%
\}_{t=1,...,T}$ such that 
\begin{eqnarray*}
U_{s} & \leq & U_{t}+\sum_{m=1}^{m=M}\lambda_{t}^{m}\mathbf{p}_{t}^{m}\cdot(\mathbf{q}_{s}^{m}-\mathbf{q}_{t}^{m})\,\,\,\,\forall s,t \\
\lambda_{t}^{m} & > & 0\,\,\,\,\,\forall m,t
\end{eqnarray*}
\end{prop}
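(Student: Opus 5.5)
The plan is to adapt the standard Afriat argument to a problem with $M$ linear constraints, one for each account. The account-specific Lagrange multipliers $\lambda_t^m$ will play the role of the single marginal utility of income in Afriat's Theorem. The argument has two directions: necessity uses concavity and the Karush--Kuhn--Tucker conditions, and sufficiency constructs a piecewise-linear utility.

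\emph{(1) $\Rightarrow$ (2).} Suppose $u$ is concave, continuous and monotonic, and that accounts $\{x_t^m\}$ rationalise the data for some partition. Monotonicity implies that every account constraint binds at $\mathbf{q}_t$, so $\mathbf{p}_t^m\cdot\mathbf{q}_t^m=x_t^m$. The feasible set at $t$ is a product of $M$ half-spaces intersected with the nonnegative orthant, and Slater's condition holds whenever $x_t^m>0$. The KKT theorem for concave programs then gives multipliers $\lambda_t^m\ge 0$ and a supergradient $\xi_t\in\partial u(\mathbf{q}_t)$ with $\xi_t^m\le \lambda_t^m\mathbf{p}_t^m$ componentwise, with equality on goods consumed in positive amounts. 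Set $U_t=u(\mathbf{q}_t)$. The supergradient inequality then gives
\[
U_s\le U_t+\xi_t\cdot(\mathbf{q}_s-\mathbf{q}_t)\le U_t+\sum_{m}\lambda_t^m\mathbf{p}_t^m\cdot(\mathbf{q}_s^m-\mathbf{q}_t^m).
\]
The second inequality uses complementary slackness on zero components, where $\mathbf{q}_s\ge 0=\mathbf{q}_t$. Strict positivity of $\lambda_t^m$ is the delicate point. I would obtain it from strict monotonicity of $u$ (local nonsatiation within each account), or else by a perturbation in which each $\lambda_t^m$ is replaced by $\lambda_t^m+\epsilon$ and $U$ is rescaled. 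Alternatively, one can read the definition as requiring each account to be nondegenerate. I expect this to be the main technical obstacle, together with the case $x_t^m=0$, where Slater's condition fails. In that case the account contains only goods with $\mathbf{q}_t^m=0$, and any positive $\lambda_t^m$ works because $\mathbf{p}_t^m\cdot(\mathbf{q}_s^m-0)\ge 0$.

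\emph{(2) $\Rightarrow$ (1).} Given a partition and numbers satisfying the inequalities, define
\[
u(\mathbf{q})=\min_t\Big\{U_t+\sum_m\lambda_t^m\mathbf{p}_t^m\cdot(\mathbf{q}^m-\mathbf{q}_t^m)\Big\}.
\]
This function is the minimum of affine functions with strictly positive gradients, so it is concave, continuous and strictly increasing. The Afriat inequalities give $u(\mathbf{q}_t)=U_t$, because the $t$-th term equals $U_t$ and every other term is at least $U_t$. Now set $x_t^m=\mathbf{p}_t^m\cdot\mathbf{q}_t^m$; these sum to $x_t$. Take any $\mathbf{q}$ with $\mathbf{p}_t^m\cdot\mathbf{q}^m\le x_t^m$ for all $m$. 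Then
\[
u(\mathbf{q})\le U_t+\sum_m\lambda_t^m\big(\mathbf{p}_t^m\cdot\mathbf{q}^m-x_t^m\big)\le U_t=u(\mathbf{q}_t),
\]
and the second inequality uses $\lambda_t^m>0$. Hence the data are rationalised by $u$ and these accounts. This direction is routine; the positivity of the multipliers is exactly what makes each account constraint bite.

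Finally, the partition appearing in statement (2) is the same partition used in statement (1), so the existential quantifier over partitions passes through both directions unchanged. A remark worth adding: when $M=1$ the conditions reduce to the usual Afriat inequalities, and hence to GARP.
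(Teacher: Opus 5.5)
Your argument follows the paper's own proof: first-order (KKT) conditions plus concavity for necessity, and the minimum-of-affine-functions utility for sufficiency, with the accounts $x_t^m=\mathbf{p}_t^m\cdot\mathbf{q}_t^m$. You are more careful than the paper in two places: you use complementary slackness on zero components, and you make the choice of accounts explicit, which the paper leaves implicit. The main line is sound, but two sub-steps as written would fail.

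First, positivity of $\lambda_t^m$ can only come from strict monotonicity, so drop the other routes. If $\lambda_t^m=0$, then $\xi_t^k\le 0$ for every good $k$ in account $m$. The supergradient inequality then gives $u(\mathbf{q}_t+\delta\mathbf{e}_k)\le u(\mathbf{q}_t)$, which contradicts strict monotonicity. The perturbation route cannot work. A constant $u$ is weakly monotone and rationalises every data set, whereas statement 2 with $M=1$ is GARP. So no argument that ignores strict monotonicity can deliver $\lambda_t^m>0$.

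Second, in the degenerate case $x_t^m=0$ it is not true that any positive $\lambda_t^m$ works. The inequalities from the other accounts only control $u$ on bundles with $\mathbf{q}^m=\mathbf{0}$, so the account-$m$ term must dominate the value of account-$m$ goods at $\mathbf{q}_t$. For example, take $u=q_1+q_2$ with accounts $\{1\},\{2\}$, set $x_t^1=0$ and $U_t=u(\mathbf{q}_t)$, and let some observation $s$ have $q_{s,1}>0$. Then you need $\lambda_t^1p_{t,1}\ge 1$. The fix is to choose $\lambda_t^m$ large enough. This is possible because $\mathbf{p}_t^m\cdot\mathbf{q}_s^m>0$ whenever $\mathbf{q}_s^m\neq\mathbf{0}$, and there are only finitely many $s$.
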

\medskip
\begin{proof}
The proof of this and subsequent results are in the Appendix.
\end{proof}
\medskip
Economically, this says that the consumer behaves as if goods are split into hidden accounts, each with its own shadow budget, and the observed choice is jointly optimal only after that latent partition is taken into account.

This result provides necessary and sufficient conditions for the model: if a suitable partition exists then the model can provide a rationalisation for the data and is a potential data-generating process; if not then the data are incompatible with mental accounting as described by the optimising model above.

We have the following corollaries which relate to the number of accounts.

\begin{cor}
Proposition 1 with $M=1$ is equivalent to GARP.
\end{cor}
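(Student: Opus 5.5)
Setting $M=1$ in Proposition 1 leaves only the trivial partition: every good lies in the single account, $\mathbf{p}_t^1=\mathbf{p}_t$, $\mathbf{q}_t^1=\mathbf{q}_t$, and the account constraint is $x_t^1=x_t=\mathbf{p}_t\cdot\mathbf{q}_t$. Statement 2 of Proposition 1 therefore reduces to the existence of numbers $\{U_t,\lambda_t\}_{t=1,\dots,T}$ with
\[
U_s\leq U_t+\lambda_t\mathbf{p}_t\cdot(\mathbf{q}_s-\mathbf{q}_t)\quad\forall s,t,\qquad \lambda_t>0\quad\forall t,
\]
which are exactly the Afriat inequalities. Statement 1 reduces to the standard definition of rationalisation by a concave, continuous, monotonic utility function under the single budget constraint $\mathbf{p}_t\cdot\mathbf{q}\leq x_t$. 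The plan is to make this reduction explicit and then invoke Afriat's Theorem (\cite{afriat1967construction}, \cite{varianNonparametricApproachDemand1982}). It gives the equivalence of three conditions: rationalisability by a concave, continuous, monotonic (non-satiated) utility, solvability of the Afriat inequalities, and GARP. The corollary follows by chaining the equivalence of Proposition 1 with the equivalence in Afriat's Theorem.

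For completeness, I would also sketch the two directions linking the Afriat inequalities to GARP directly, so the argument does not rest only on citation. For inequalities $\Rightarrow$ GARP, I would take a revealed preference cycle $\mathbf{q}_{t_1}R^0\mathbf{q}_{t_2}R^0\cdots R^0\mathbf{q}_{t_k}$ with $\mathbf{p}_{t_k}\cdot\mathbf{q}_{t_1}<\mathbf{p}_{t_k}\cdot\mathbf{q}_{t_k}$. Since each $\lambda_t>0$, each link gives $U_{t_{j+1}}\leq U_{t_j}$, and the closing strict link gives $U_{t_1}<U_{t_k}$, a contradiction. For GARP $\Rightarrow$ inequalities, I would follow Varian's constructive algorithm or a Fostel--Scarf--Todd style linear-programming argument that produces the $U_t,\lambda_t$.

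The main obstacle is not the reduction itself but making sure the definitions line up. Proposition 1's rationalisation uses ``monotonic'', while Afriat's Theorem needs local non-satiation; I would read monotonicity as strict (or at least non-satiated) so the two coincide. I would also check that with $M=1$ the residual-account convention imposes nothing beyond $x_t^1=x_t$. Once that is settled, the corollary is immediate.
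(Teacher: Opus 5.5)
Your proposal is correct and follows the paper's proof: with $M=1$ the only partition is the trivial one, the inequalities of Proposition 1 collapse to the Afriat inequalities $U_s\leq U_t+\lambda_t\mathbf{p}_t\cdot(\mathbf{q}_s-\mathbf{q}_t)$, $\lambda_t>0$, and Afriat's Theorem then gives equivalence with GARP. Your direct argument for the inequalities-imply-GARP direction is a sound addition the paper leaves implicit. So is your remark that ``monotonic'' must be read as locally non-satiated, since otherwise a constant utility would rationalise any data.
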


\begin{cor}
If it is the case that $M=K$ then the data must satisfy the conditions in Proposition 1.
\end{cor}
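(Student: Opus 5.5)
We need to prove Corollary 2: if $M=K$, i.e. every good is in its own account, then the inequalities of Proposition 1 always have a solution for any data. The plan is to verify statement 2 of Proposition 1 directly by exhibiting numbers, and then conclude via the equivalence. With $M=K$ the only partition is into singletons, so $\mathbf{p}_t^m\cdot(\mathbf{q}_s^m-\mathbf{q}_t^m)=p_{t}^{k}(q_{s}^{k}-q_{t}^{k})$ for good $k$. The system becomes
\[
U_s\le U_t+\sum_{k=1}^{K}\lambda_t^k p_t^k\,(q_s^k-q_t^k)\quad\forall s,t,\qquad \lambda_t^k>0.
\]
Since each $\lambda_t^k$ multiplies a single price, we may absorb prices into the multipliers: set $\mu_t^k=\lambda_t^k p_t^k$. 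Assuming strictly positive prices, $\mu_t^k>0$ is equivalent to $\lambda_t^k>0$. The task is therefore to find any $U_t$ and any positive vector $\mu_t$ satisfying $U_s\le U_t+\mu_t\cdot(\mathbf{q}_s-\mathbf{q}_t)$.

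The natural construction takes a concave, continuous, strictly increasing function and its supergradients. Choose $u(\mathbf{q})=\sum_k \phi_k(q^k)$ with each $\phi_k$ strictly concave and strictly increasing, for instance $\phi_k(q)=\log(1+q)$, or any smooth concave increasing function defined on the nonnegative reals. Set $U_t=u(\mathbf{q}_t)$ and $\mu_t^k=\phi_k'(q_t^k)>0$. Concavity of each $\phi_k$ gives $\phi_k(q_s^k)\le\phi_k(q_t^k)+\phi_k'(q_t^k)(q_s^k-q_t^k)$, and summing over $k$ yields the required inequality. Then define $\lambda_t^k=\phi_k'(q_t^k)/p_t^k>0$, so statement 2 holds, and by Proposition 1 so does statement 1.

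Two points need care:
\begin{itemize}
\item Strictly positive prices are needed to divide by $p_t^k$. If some $p_t^k=0$, the corresponding term vanishes, and positivity of $\lambda$ imposes nothing on that coordinate. The construction then still works, with an arbitrary positive $\lambda$, provided the remaining terms suffice. This follows if we choose $\phi_k$ constant-slope where needed; more simply, I would state the standard assumption $\mathbf{p}_t\gg 0$.
\item The derivative must exist and be positive at $q_t^k=0$. The function $\log(1+q)$ handles this.
\end{itemize}

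Neither point is a real obstacle. The content of the corollary is simply that with $K$ free multipliers per observation, the data impose no restrictions at all.
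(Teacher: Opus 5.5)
Your proof is correct and follows the paper's route: you absorb $\lambda_t^k p_t^k$ into free positive virtual prices and exhibit a witness; the paper simply sets every virtual price to one (i.e.\ $U_t=\sum_k q_t^k$ and $\lambda_t^k=1/p_t^k$, which makes every inequality an equality and needs no boundary or differentiability check), while you use the gradients of $\sum_k\log(1+q^k)$. Drop the aside suggesting zero prices can be accommodated: with $K=M=1$, $p_1=0<p_2=1$ and $q_1=1<q_2=2$, the inequalities force $U_2\le U_1\le U_2-\lambda_2<U_2$, so $\mathbf{p}_t\gg 0$ is genuinely needed for the corollary, not merely a convenience.
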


If all of the data are in a single account then the model is the standard utility maximisation model and the conditions in Proposition 1 reduce to GARP. Thus the standard rational choice model is a special case of the mental accounting model. At the other end of the spectrum if there are as many accounts as goods then the model is unfalsifiable.

The parameter $M$ therefore governs the empirical content of the model. At one extreme, $(M=1)$ imposes the same discipline as the standard unitary model. At the other extreme, $(M=K)$ leaves enough freedom in the account-specific shadow prices to rationalise any finite data set. Intermediate values of $M$ are the economically interesting case: they allow violations of global GARP to be explained by account-specific budgeting, but only when those violations can be organised around a stable latent partition of goods.

\subsection{Separable mental accounting}

An interesting variant of the mental accounting model assumes that the accounts are defined over groups of goods which are separable and hence that the marginal rates of substitution between items within an account are independent of the consumption of items in other accounts. Then the consumer solves the following problem. 
\begin{equation*}
\max_{\mathbf{q}}u(v^{1}(\mathbf{q}^{1}),...,v^{M}(\mathbf{q}^{M}))
\text{ subject to } \mathbf{p}_{t}^{1}\cdot\mathbf{q}^{1}\leq x_{t}^{1},...,\mathbf{p}_{t}^{m}\cdot\mathbf{q}^{M}\leq x_{t}-\sum_{m=1}^{m=M-1}x_{t}^{m}
\end{equation*}
A collective household model in which each member was individually responsible for the purchase of certain classes of items would exactly reflect this structure (this is akin to \cite{Belgians2007} \textquotedblleft situation-dependent dictatorship\textquotedblright\ but with the partition of responsibility being between goods rather than periods). This is also the case considered by \cite{Strotz1955}. In this case the only restriction imposed by maximising behaviour is that the data within each account satisfies GARP.

\begin{prop}
(Separable accounting). The following statements are equivalent.\\
1. There exists a concave, continuous, monotonic utility function and a set of $M$ separable accounts which rationalise the data.\\
2. There exist a partition of items into $M$ groups such that there exists a set of real numbers $\{U_{t}^{1}...,U_{t}^{M},\lambda _{t}^{1},...,\lambda
_{t}^{M}\}_{t=1,...,T}$ such that 
\begin{eqnarray*}
U_{s}^{m} &\leq &U_{t}^{m}+\lambda _{t}^{m}\mathbf{p}_{t}^{m}\cdot(\mathbf{q}_{s}^{m}-\mathbf{q}_{t}^{m})\,\,\,\,\forall m,s,t \\
\lambda _{t}^{m} &>&0\,\,\,\,\,\forall m,t
\end{eqnarray*}
3. There exist a partition of items into $M$ groups such that the data \\$%
\left\{ (\mathbf{p}_{t}^{m},\mathbf{q}_{t}^{m})\right\} _{t=1,...,T}$ satisfies GARP for each partition $m=1,...,M$.
\end{prop}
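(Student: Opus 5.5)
Fix a partition throughout. The plan is to prove $2\Leftrightarrow 3$ account by account using Afriat's Theorem, and then link $1$ with $2$ by building, or decomposing, a separable utility $u(v^1,\dots,v^M)$. The point that makes everything work is that the account budgets $x_t^m$ are unobserved. We may therefore set $x_t^m=\mathbf{p}_t^m\cdot\mathbf{q}_t^m$, so each account's problem becomes a standard budget-constrained problem in $\mathbf{q}^m$ alone.

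For $2\Leftrightarrow 3$, note that for each fixed $m$ the inequalities in statement 2 involve only $\{U_t^m,\lambda_t^m\}_t$ and the data $\{(\mathbf{p}_t^m,\mathbf{q}_t^m)\}_t$. They are exactly the Afriat inequalities for that sub-dataset. Afriat's Theorem (in the form already used for Corollary 1, i.e.\ Proposition 1 with $M=1$ is GARP) says they are solvable if and only if that sub-dataset satisfies GARP. Since the systems for different $m$ share no unknowns, the whole system is solvable if and only if each block is, which gives the equivalence.

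For $2\Rightarrow 1$, define for each account the Afriat utility
\[
v^m(\mathbf{q}^m)=\min_t\{U_t^m+\lambda_t^m\mathbf{p}_t^m\cdot(\mathbf{q}^m-\mathbf{q}_t^m)\}.
\]
This is concave, continuous and monotonic, and by the inequalities it satisfies $v^m(\mathbf{q}_t^m)=U_t^m$. Take $u(v^1,\dots,v^M)=\sum_m v^m$, which is concave, continuous, monotonic and separable in the required sense, and set $x_t^m=\mathbf{p}_t^m\cdot\mathbf{q}_t^m$. Now take any $\mathbf{q}$ satisfying $\mathbf{p}_t^m\cdot\mathbf{q}^m\le x_t^m$ for all $m$. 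Then
\[
v^m(\mathbf{q}^m)\le U_t^m+\lambda_t^m\mathbf{p}_t^m\cdot(\mathbf{q}^m-\mathbf{q}_t^m)\le U_t^m ,
\]
and summing over $m$ gives $u(\mathbf{q})\le u(\mathbf{q}_t)$. (A weakly increasing aggregator would also work, but the sum is simplest.)

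For $1\Rightarrow 3$, suppose $u(v^1,\dots,v^M)$ with accounts $x_t^m$ rationalises the data, with $u$ increasing in each $v^m$. Fix $m$ and $t$, and let $\mathbf{q}^m$ satisfy $\mathbf{p}_t^m\cdot\mathbf{q}^m\le x_t^m$. Form the bundle that replaces $\mathbf{q}_t^m$ with $\mathbf{q}^m$ and keeps the other components of $\mathbf{q}_t$. It satisfies every account constraint, so its utility is at most $u(\mathbf{q}_t)$. Because $u$ is increasing, this forces $v^m(\mathbf{q}^m)\le v^m(\mathbf{q}_t^m)$. So $\mathbf{q}_t^m$ maximises $v^m$ on the budget $\{\mathbf{p}_t^m\cdot\mathbf{q}^m\le x_t^m\}$.

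Monotonicity of $u$, hence local nonsatiation of $v^m$, gives $\mathbf{p}_t^m\cdot\mathbf{q}_t^m=x_t^m$, possibly after an argument that the account budget is exhausted. The standard argument that a locally nonsatiated maximiser generates GARP-consistent data then yields GARP for account $m$. The main obstacle is this step. The definition only gives concavity and monotonicity for the composite utility, so I need strict increasingness of $u$ in each $v^m$ and local nonsatiation of each $v^m$. I would read ``monotonic'' as strictly increasing, which delivers both, and note that this is the assumption implicit in the separable formulation.
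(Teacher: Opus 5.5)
Your proof is correct, but it is organised differently from the paper's, most importantly in the sufficiency direction.

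The paper obtains $1\Rightarrow 3$ by citing Varian (1983, Theorem 3) and gets the per-account Afriat inequalities from Afriat's Theorem. It then sums them over $m$ with $U_t=\sum_m U_t^m$ to reach the system of Proposition 1, and says that ``the rest parallels Proposition 1''. In effect, it rationalises the data with $\min_i\{U_i+\sum_m\lambda_i^m\mathbf{p}_i^m\cdot(\mathbf{q}^m-\mathbf{q}_i^m)\}$. That route gives Corollary 3 at no cost. However, this min-of-sums function need not be of the form $u(v^1,\dots,v^M)$. Which affine piece is active depends on every account, so within-account marginal rates of substitution can shift with consumption in other accounts. Strictly speaking, it certifies only a Proposition 1 rationalisation.

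Your sum-of-mins $\sum_m v^m$ is genuinely (indeed additively) separable. It therefore delivers statement 1 as stated, and it also shows that additivity of the aggregator costs nothing observationally in this model.

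On necessity, your direct swap-one-account argument replaces the citation. It also makes explicit the requirement that the macro function be strictly increasing in each sub-utility. That requirement is part of Varian's definition of weak separability, so the paper relies on it implicitly too.

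Finally, the budget-exhaustion step you flag is not needed. Feasibility of $\mathbf{q}_t$ together with $\sum_m x_t^m=x_t=\mathbf{p}_t\cdot\mathbf{q}_t$ forces $\mathbf{p}_t^m\cdot\mathbf{q}_t^m=x_t^m$. In any case, a maximiser on $\{\mathbf{p}_t^m\cdot\mathbf{q}^m\le x_t^m\}$ also maximises on $\{\mathbf{p}_t^m\cdot\mathbf{q}^m\le\mathbf{p}_t^m\cdot\mathbf{q}_t^m\}$, which is all the GARP argument uses.
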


Here the accounts behave like independent sub-problems, so empirical rationalisability requires only that each account's data satisfy GARP on its own.

The separable version changes the interpretation of accounts. In Proposition 1, accounts restrict feasible reallocations, but preferences may still link goods across accounts. In Proposition 2, by contrast, the accounts are also preference domains: the marginal trade-offs among goods within one account are independent of consumption in other accounts. The testable implication is correspondingly local. Once a candidate partition has been chosen, the data for each account must be rationalisable as a separate consumer problem. This makes separable accounting more transparent empirically, but also gives the account partition a stronger economic interpretation.

Proposition 2 implies Proposition 1. Thus we have

\begin{cor}
If the data satisfy the condition for separable accounting they necessarily satisfy the conditions for pure mental accounts.
\end{cor}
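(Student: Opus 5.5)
The plan is to work entirely at the level of the Afriat inequalities. Suppose the data satisfy separable accounting. By Proposition 2 (statement 2) there is a partition of the items into $M$ groups and numbers $\{U_t^m,\lambda_t^m\}$ with $\lambda_t^m>0$ such that
\[
U_s^m\leq U_t^m+\lambda_t^m\mathbf{p}_t^m\cdot(\mathbf{q}_s^m-\mathbf{q}_t^m)
\]
holds for all $m,s,t$. I will keep the same partition and the same multipliers $\lambda_t^m$, and define a single utility number for each observation by summing across accounts:
\[
U_t=\sum_{m=1}^{M}U_t^m .
\]

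The key step is to sum the account-level inequalities over $m$ for each fixed pair $(s,t)$. This gives
\[
U_s=\sum_m U_s^m\leq \sum_m U_t^m+\sum_m\lambda_t^m\mathbf{p}_t^m\cdot(\mathbf{q}_s^m-\mathbf{q}_t^m)=U_t+\sum_m\lambda_t^m\mathbf{p}_t^m\cdot(\mathbf{q}_s^m-\mathbf{q}_t^m),
\]
which is exactly the inequality system in statement 2 of Proposition 1. Positivity of the $\lambda_t^m$ carries over unchanged. Proposition 1 then delivers a concave, continuous, monotonic utility function and a set of $M$ accounts that rationalise the data, so the pure mental accounting conditions hold.

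As a cross-check at the level of the models rather than the inequalities, note that the separable problem is a special case of the pure accounting problem. The function $u(v^1(\mathbf{q}^1),\dots,v^M(\mathbf{q}^M))$ is itself a utility over $\mathbf{q}$, and the constraint set is identical. So any separable rationalisation is a pure-accounting rationalisation, provided the composite utility is concave, continuous and monotone. In the Afriat construction one can take the additive aggregator $\sum_m v^m$ with concave $v^m$, which has these properties. This is why I prefer the inequality route: it avoids having to check properties of a general aggregator $u$.

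There is no genuine obstacle here. The only point needing care is that the multipliers are account-specific in both propositions, so no common normalisation across accounts is required. That is precisely what allows the summation to go through term by term.
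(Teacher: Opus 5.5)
Your proposal is correct and follows essentially the paper's own route. The paper's proof of Proposition 2 sums the account-level Afriat inequalities over $m$ with $U_t=\sum_{m=1}^{M}U_t^m$ and the same account-specific multipliers $\lambda_t^m>0$, which yields exactly the system in statement 2 of Proposition 1; this is why Corollary 3 is recorded as immediate from inspection (your model-level cross-check is also valid but not needed).
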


\subsection{Labelling}

Suppose now that one part of the consumer\textquoteright s budget is exogenously labelled, but not mandated, as being intended for spending on a certain category of goods. Real world examples from the UK include Child Benefit\footnote{Child Benefit is a government-funded payment in the UK designed to
help parents and guardians with the costs of raising a child.} and the Winter Fuel Payment.\footnote{The Winter Fuel Payment, is an annual, tax-free lump sum paid by the
UK government which designed to help older people with heating costs
during the colder months.} Unlike the Supplemental Nutrition Assistance Program in the US (food stamps) which provides targeted food-purchasing assistance for low-income households, recipients of Child Benefit and the Winter Fuel Payment receive the value in cash and have the autonomy to spend it on whatever they determine. Call this labelled part of the budget $x^{A}$. A rational consumer would ignore the labelling but suppose the consumer operating a mental account feels it necessary to spend the labelled portion only on goods in group $A$. The rest of the goods we will denote by $B$. If a consumer operates according to such a model then we have the following result.\bigskip{}

\begin{prop}
(Labelling). The following statements are equivalent.\\
1. There exists a concave, continuous, monotonic utility function which rationalises the data $\left\{ (\mathbf{p}_{t},\mathbf{q}_{t},x_{t}^{A})\right\} _{t=1,...,T}$.\\
2. There exists a set of real numbers $\{U_{t},\lambda_{t},\mu_{t}\}_{t=1,...,T}$ such that 
\begin{align*}
U_{s} & \leq U_{t}+\lambda_{t}\mathbf{p}_{t}\cdot(\mathbf{q}_{s}-\mathbf{q}_{t})+\mu_{t}\mathbf{p}_{t}^{A}\cdot(\mathbf{q}_{s}^{A}-\mathbf{q}_{t}^{A})\\
\lambda_{t} & >0\\
\mu_{t} & \ge0\text{ (with equality when $\mathbf{p}_{t}^{A}\cdot\mathbf{q}_{t}^{A}>x_{t}^{A}$}).
\end{align*}
\end{prop}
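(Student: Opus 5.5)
The plan is to follow the Afriat template, treating the labelling model as maximisation of $u$ subject to two constraints at each $t$: the overall budget $\mathbf{p}_t\cdot\mathbf{q}\le x_t$ and the label constraint. The consumer feels bound to spend at least $x_t^A$ on group $A$. Equivalently, the non-labelled income $x_t-x_t^A$ may only be spent on $B$, so $\mathbf{p}_t^B\cdot\mathbf{q}^B\le x_t-x_t^A$; combined with the total budget, this gives the constraint set $\{\mathbf{q}:\mathbf{p}_t\cdot\mathbf{q}\le x_t,\ \mathbf{p}_t^A\cdot\mathbf{q}^A\ge x_t^A\}$.

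I will first rewrite the label constraint in the sign convention of statement 2. With $\mathbf{p}_t\cdot\mathbf{q}=\mathbf{p}_t^A\cdot\mathbf{q}^A+\mathbf{p}_t^B\cdot\mathbf{q}^B$, the multiplier term in statement 2 corresponds to a constraint on $A$-spending. The complementary slackness condition says $\mu_t=0$ whenever $A$-spending strictly exceeds the label. So the label binds only when the consumer spends exactly $x_t^A$ on $A$, and a positive $\mu_t$ records that she would like to divert some labelled money to $B$. I will fix this interpretation, since the multiplier sign has to be checked carefully against it.

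\textbf{(1)$\Rightarrow$(2).} Set $U_t=u(\mathbf{q}_t)$. Since $u$ is concave, a supergradient $\xi_t$ exists at $\mathbf{q}_t$. The feasible set is polyhedral, so the KKT conditions hold without a further constraint qualification. They give $\xi_t=\lambda_t\mathbf{p}_t+\mu_t(\mathbf{p}_t^A,0)$ up to the nonnegative multipliers on $\mathbf{q}\ge0$, which only strengthen the inequality. Here $\lambda_t\ge0$, $\mu_t\ge0$, and $\mu_t=0$ if the label constraint is slack. Monotonicity is then used to get $\lambda_t>0$, by the standard argument that a non-binding overall budget contradicts local nonsatiation. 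The concavity inequality $u(\mathbf{q}_s)\le u(\mathbf{q}_t)+\xi_t\cdot(\mathbf{q}_s-\mathbf{q}_t)$ then gives the Afriat inequalities. Strict positivity of $\lambda_t$ may need a perturbation argument if $u$ is only weakly monotone; I will handle it as in Proposition 1.

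\textbf{(2)$\Rightarrow$(1).} Define $u(\mathbf{q})=\min_t\{U_t+\lambda_t\mathbf{p}_t\cdot(\mathbf{q}-\mathbf{q}_t)+\mu_t\mathbf{p}_t^A\cdot(\mathbf{q}^A-\mathbf{q}_t^A)\}$. This is a minimum of affine functions, so it is concave and continuous. It is increasing, with strict monotonicity from $\lambda_t>0$ and $\mu_t\ge0$. The inequalities in statement 2 give $u(\mathbf{q}_t)=U_t$.

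For any $\mathbf{q}$ in the period-$t$ constraint set, $u(\mathbf{q})$ is at most the $t$-th term. That term equals $U_t$ plus $\lambda_t(\mathbf{p}_t\cdot\mathbf{q}-x_t)\le0$ plus $\mu_t$ times the change in $A$-spending. This last term is nonpositive for the following reason. If $\mu_t>0$, complementary slackness forces $\mathbf{p}_t^A\cdot\mathbf{q}_t^A=x_t^A$, and the constraint requires the $A$-spending change to have the appropriate sign. If $\mu_t=0$, the term vanishes. Hence $u(\mathbf{q})\le U_t=u(\mathbf{q}_t)$, which is the required rationalisation.

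The main obstacle is this last step. It depends on matching the direction of the label inequality to the $+\mu_t$ term in statement 2, and the combined constraint must be written so that the $A$-spending change and $\mu_t$ have opposite effects in every feasible case. If the signs do not line up under the ``at least $x_t^A$ on $A$'' reading, I will recast the label constraint as the dual one, $\mathbf{p}_t^A\cdot\mathbf{q}^A\le x_t^A+$ slack on $B$, and redo the KKT derivation with that form. Either way the structure of the argument is unchanged.
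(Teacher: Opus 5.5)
The step you flag as the main obstacle does fail, and neither of your readings rescues it, because the proposition as stated is false for the model you and the paper set up.

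With the label constraint $\mathbf{p}_t^A\cdot\mathbf{q}^A\ge x_t^A$, the KKT condition is $\xi_t=\lambda_t\mathbf{p}_t-\mu_t(\mathbf{p}_t^A,\mathbf{0})-\nu_t$, not $+\mu_t$. A binding ``spend at least $x_t^A$ on $A$'' constraint \emph{lowers} the shadow value of $A$-goods relative to their prices. Your own interpretation already says this: a positive $\mu_t$ means she would like to divert labelled money to $B$. The sufficiency step fails for the same reason. If $\mu_t>0$ and $\mathbf{p}_t^A\cdot\mathbf{q}_t^A=x_t^A$, every feasible $\mathbf{q}$ has $\mathbf{p}_t^A\cdot(\mathbf{q}^A-\mathbf{q}_t^A)\ge 0$. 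So the $\mu_t$-term is nonnegative, and $u(\mathbf{q})\le U_t$ does not follow. Your ``at most'' fallback restores the sign, but it puts the slackness clause on the wrong side and makes every observation with $\mathbf{p}_t^A\cdot\mathbf{q}_t^A>x_t^A$ infeasible, so it proves a different result. The paper's proof has the same two defects. Its first-order condition reads $\lambda_t\mathbf{p}_t^A+\mu_t\mathbf{p}_t^A$, and its sufficiency argument asserts $\mathbf{p}_t^A\cdot\mathbf{q}^A=x_t^A$ for an arbitrary comparison bundle $\mathbf{q}$, which is unjustified.

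\textbf{Counterexample.} Take one $A$-good and one $B$-good, with $\mathbf{p}_1=(1,1)$, $\mathbf{q}_1=(1,2)$, $x_1^A=1$, and $\mathbf{p}_2=(2,1)$, $\mathbf{q}_2=(2,1)$, $x_2^A=0$ (so $\mu_2=0$). Statement 2 reduces to $U_2\le U_1+\mu_1$ and $U_1\le U_2-\lambda_2$, which $U_1=0$, $U_2=\lambda_1=\lambda_2=\mu_1=1$ satisfy. But $\mathbf{q}_2$ is feasible at $t=1$, so $u(\mathbf{q}_1)\ge u(\mathbf{q}_2)$. At $t=2$, $\mathbf{p}_2\cdot\mathbf{q}_1=4<5$, so local nonsatiation gives $u(\mathbf{q}_2)>u(\mathbf{q}_1)$. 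Your min-of-affine function indeed gives $u(\mathbf{q}_2)=1>0=u(\mathbf{q}_1)$.

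Necessity fails too. Swap the labels to $x_1^A=0$, $x_2^A=4$. These data are rationalised by $u=\min(2q^A,q^B)+\varepsilon(q^A+q^B)$, yet statement 2 would require $U_1\le U_2-\lambda_2-2\mu_2<U_2\le U_1$, which is impossible.

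\textbf{Repair.} Use the reformulation from your own first paragraph, $\mathbf{p}_t^B\cdot\mathbf{q}^B\le x_t-x_t^A$. Its multiplier attaches to the $B$-goods, and the correct condition is
\[
U_s\le U_t+\lambda_t\mathbf{p}_t\cdot(\mathbf{q}_s-\mathbf{q}_t)+\mu_t\mathbf{p}_t^B\cdot(\mathbf{q}_s^B-\mathbf{q}_t^B),\qquad \lambda_t>0,\ \mu_t\ge0,
\]
with the same slackness clause. Equivalently, put $-\mu_t$ on the $A$-term and add $\mu_t<\lambda_t$; without that bound the constructed utility is decreasing in $A$-goods. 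Your template then goes through:
\begin{itemize}
\item For necessity, write $-\mu_t(\mathbf{p}_t^A,\mathbf{0})=\mu_t(\mathbf{0},\mathbf{p}_t^B)-\mu_t\mathbf{p}_t$. Strict monotonicity then keeps the new budget multiplier positive.
\item For sufficiency, when the label binds, $\mathbf{p}_t^B\cdot\mathbf{q}^B\le x_t-x_t^A=\mathbf{p}_t^B\cdot\mathbf{q}_t^B$ for every feasible $\mathbf{q}$. This uses the maintained assumption $\mathbf{p}_t^A\cdot\mathbf{q}_t^A\ge x_t^A$, which should be stated explicitly.
\end{itemize}
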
 
The multiplier $(\mu_t)$ measures the behavioural force of the label. When observed spending on the labelled category exceeds the labelled amount, the labelled income is infra-marginal: the consumer would have spent at least that much on the category even without the label, so the labelling restriction has no additional shadow value. In that case $\mu_t=0$, and the revealed-preference inequalities reduce locally to the standard Afriat inequalities. When spending on the labelled category is exactly equal to the labelled amount, the label may bind. The term involving $\mu_t$ then allows the labelled category to carry an additional shadow value, capturing the idea that the consumer treats labelled income as earmarked even though it is formally fungible.

\section{Rational Mental Accounting}

In this section we turn to multi-stage budgeting as the rational benchmark. Two-stage, or multi-stage budgeting is very close in spirit to some boundedly rational mental accounting but multi-stage budgeting requires that the overall allocation of expenditures is consistent with maximisation - i.e. not just that it is optimal, conditional on the initial allocation of income to accounts. \cite{Strotz1955} argued that a sufficient condition for two-stage budgeting is that the household's utility function be separable. \cite{Gorman_1959} showed that, while necessary, separability is not sufficient. In addition, it is required that the sub-utility functions enter utility either additively or through an intermediate function which is homogeneous of degree one. These restrictions on the consumer's preferences imply empirically refutable restrictions on the system of demand functions.

\begin{prop}
(Weakly separable two-stage budgeting). The following statements are equivalent:\\
1. There exists a weakly separable utility function with homothetic sub-utilities which provide a two-stage budgeting rationalisation for the data.\\
2. There exist real numbers $\{U_{t}^{m},\lambda_{t}^{m}\}_{t=,...,T}^{m=1,...,M}$ such that
\begin{eqnarray*}
U_{t}^{m} &\leq& U_{s}^{m}+\lambda_{t}^{m}\mathbf{p}_{t}^{m}\cdot\left(\mathbf{q}_{s}^{m}-\mathbf{q}_{t}^{m}\right)\;\forall\;s,t \\
U_{t}^{m} &=&\lambda_{t}^{m}\mathbf{p}_{t}^{m}\cdot\mathbf{q}_{t}^{m}  \; \forall\;t\\
\lambda_{t}^{m} &>& 0\;\;\;\;\forall\;t\\
\end{eqnarray*}
and $\{U_{t}^{m},\tfrac{1}{\lambda_{t}^{m}}\}_{t=1,...,T}^{m=1,...,M}$ satisfies GARP.
\end{prop}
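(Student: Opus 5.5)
The plan is to follow Varian's (1983) nonparametric treatment of homothetic weak separability, combined with Gorman's two-stage budgeting logic. First, reading of statement 2: the first inequality is written with $s,t$ reversed relative to Proposition 1. I will interpret it as the standard Afriat inequality $U_s^m\le U_t^m+\lambda_t^m\mathbf{p}_t^m\cdot(\mathbf{q}_s^m-\mathbf{q}_t^m)$, so the argument matches the earlier propositions. Combined with $U_t^m=\lambda_t^m\mathbf{p}_t^m\cdot\mathbf{q}_t^m$, this is the homothetic Afriat system for each group. The numbers $U_t^m$ are then the values of a linearly homogeneous sub-utility. The numbers $P_t^m:=1/\lambda_t^m$ are the corresponding group price indices, since $P_t^mU_t^m=\mathbf{p}_t^m\cdot\mathbf{q}_t^m$ is group expenditure.

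\emph{Necessity (1 $\Rightarrow$ 2).} Suppose $u(v^1(\mathbf{q}^1),\dots,v^M(\mathbf{q}^M))$ is concave and monotone with each $v^m$ concave and homothetic, and normalise $v^m$ to be homogeneous of degree one. By Gorman, two-stage budgeting holds. In the lower stage, $\mathbf{q}_t^m$ maximises $v^m$ on $\{\mathbf{p}_t^m\cdot\mathbf{q}^m\le x_t^m\}$ with $x_t^m=\mathbf{p}_t^m\cdot\mathbf{q}_t^m$. Define $U_t^m=v^m(\mathbf{q}_t^m)$ and $\lambda_t^m=U_t^m/x_t^m$. Then the supergradient inequality of the concave, linearly homogeneous $v^m$ at an optimum gives the Afriat inequalities. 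The supergradient is the lower-stage multiplier, and by Euler's theorem it equals $v^m/x^m$, which delivers the equality constraint. Positivity follows from monotonicity. In the upper stage, the consumer maximises $u(U^1,\dots,U^M)$ subject to $\sum_m P_t^mU^m\le x_t$. The observed $U_t=(U_t^m)_m$ is chosen at prices $(P_t^m)_m=(1/\lambda_t^m)_m$ with expenditure $\sum_mP_t^mU_t^m=x_t$. Afriat's theorem applied to this upper-stage data then gives GARP for $\{U_t,1/\lambda_t\}$.

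\emph{Sufficiency (2 $\Rightarrow$ 1).} For each $m$, build $v^m(\mathbf{q}^m)=\min_t \lambda_t^m\mathbf{p}_t^m\cdot\mathbf{q}^m$. This function is concave, continuous, monotone and linearly homogeneous. The Afriat inequalities together with the equality constraint give $v^m(\mathbf{q}_t^m)=U_t^m$. Each $\mathbf{q}_t^m$ maximises $v^m$ on its group budget, because $v^m(\mathbf{q}^m)\le\lambda_t^m\mathbf{p}_t^m\cdot\mathbf{q}^m\le\lambda_t^m x_t^m=U_t^m$. Next, apply Afriat's theorem to the upper-stage data $\{(1/\lambda_t),U_t\}$, which satisfy GARP by hypothesis. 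This yields a concave, continuous, monotone $u$ such that $U_t$ maximises $u$ subject to $\sum_m P_t^mU^m\le x_t$.

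The remaining step is to show that $\mathbf{q}_t$ maximises $u(v^1,\dots,v^M)$ on the full budget $\mathbf{p}_t\cdot\mathbf{q}\le x_t$. Take any affordable $\mathbf{q}$. Then $P_t^m v^m(\mathbf{q}^m)\le P_t^m\lambda_t^m\mathbf{p}_t^m\cdot\mathbf{q}^m=\mathbf{p}_t^m\cdot\mathbf{q}^m$. Summing over $m$, the vector $(v^m(\mathbf{q}^m))_m$ lies in the upper-stage budget at $t$. Hence $u(v(\mathbf{q}))\le u(U_t)=u(v(\mathbf{q}_t))$. Concavity of the composite follows because $u$ is concave and monotone and each $v^m$ is concave.

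\emph{Main obstacle.} I expect the delicate step to be necessity. One must justify that the group price index for a homothetic sub-utility is exactly $1/\lambda_t^m$, which requires the Euler/supergradient argument and care at nondifferentiable points. One must also check that the upper-stage choice is genuinely optimal at those index prices, which is Gorman's theorem. For the homothetic case I will argue this directly from the identity $\max\{v^m:\mathbf{p}^m\cdot\mathbf{q}^m\le x^m\}=x^m/P^m(\mathbf{p}^m)$.
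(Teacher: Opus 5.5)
Your proof is correct and follows essentially the paper's route. The paper proves Proposition 4 only by citing Varian (1983) and Diewert and Parkan (1985), whose argument is the one you give: within-group homothetic Afriat inequalities yield linearly homogeneous sub-utilities with group price indices $1/\lambda_t^m$, and Afriat's theorem on the upper-stage data $\{(1/\lambda_t^m)_m,(U_t^m)_m\}$ supplies the outer utility, so your write-up is a self-contained version of that argument for $M$ groups.

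Reading the first inequality as $U_s^m\le U_t^m+\lambda_t^m\mathbf{p}_t^m\cdot(\mathbf{q}_s^m-\mathbf{q}_t^m)$ is the intended meaning (the printed version has $s$ and $t$ transposed). The cost-function identity you mention is the cleanest way to handle nondifferentiable points and zero group expenditure in the necessity step.
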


Here both stages of choice are rational: the within-account allocation satisfies an Afriat-style condition, and the upper-level budgeting problem is itself consistent with rational choice. The following result covers the additive case.

\begin{prop}
(Additive two-stage budgeting). The following statements are equivalent:\\
1. There exists an additively separable utility function with sub-utilities of the Gorman Polar Form which provide a two-stage budgeting rationalisation for the data.\\
2. There exist real numbers $\{U_{t}^{m},\lambda_{t}\}_{t=,...,T}^{m=1,...,M}$ such that
\begin{eqnarray*}
U_{t}^{m}& \leq& U_{s}^{m}+\lambda _{t}\mathbf{p}_{t}^{m}\cdot\left( 
\mathbf{q}_{s}^{m}-\mathbf{q}_{t}^{m}\right) \;\;\;\;\forall \;s,t \\
\lambda _{t}& >&0\;\;\;\;\forall \;t.
\end{eqnarray*}
\end{prop}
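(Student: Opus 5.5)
The plan is to treat Proposition 5 as an Afriat theorem for an additively separable utility $u(\mathbf{q})=\sum_{m}v^{m}(\mathbf{q}^{m})$ under a single budget constraint. The key point is that, because there is only one budget constraint, all groups share one multiplier $\lambda_t$. I read the displayed inequality in the standard Afriat direction, $U_{s}^{m}\le U_{t}^{m}+\lambda_{t}\mathbf{p}_{t}^{m}\cdot(\mathbf{q}_{s}^{m}-\mathbf{q}_{t}^{m})$, which is how it appears in Proposition 2. As printed, the indices look transposed: with $s=t$ fixed and $\mathbf{q}_{s}^{m}\gg\mathbf{q}_{t}^{m}$ the printed version would contradict monotonicity. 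I would prove the two directions separately and deal with the two-stage and Gorman-polar-form aspects at the end.

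\emph{Necessity (1 $\Rightarrow$ 2).} Suppose $u=\sum_m v^m$ with each $v^m$ concave and monotone, and suppose $\mathbf{q}_t$ maximises $u$ on $\{\mathbf{q}:\mathbf{p}_t\cdot\mathbf{q}\le x_t\}$. The Kuhn--Tucker conditions, in supergradient form to avoid differentiability, give a single $\lambda_t>0$ with $\lambda_t\mathbf{p}_t\in\partial u(\mathbf{q}_t)$. Additivity means the superdifferential splits across groups, so $\lambda_t\mathbf{p}_t^m\in\partial v^m(\mathbf{q}_t^m)$ for every $m$. If a good has zero consumption, I would handle the complementary-slackness inequality in the usual Afriat way. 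Setting $U_t^m=v^m(\mathbf{q}_t^m)$, concavity of $v^m$ then delivers the inequalities for every $m,s,t$.

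\emph{Sufficiency (2 $\Rightarrow$ 1).} Given the numbers, I define for each group
\[
v^{m}(\mathbf{q}^{m})=\min_{t}\{U_{t}^{m}+\lambda_{t}\mathbf{p}_{t}^{m}\cdot(\mathbf{q}^{m}-\mathbf{q}_{t}^{m})\}
\]
and set $u=\sum_m v^m$. Each $v^m$ is a minimum of affine functions with positive slopes, so it is concave, continuous and monotone, and the inequalities imply $v^m(\mathbf{q}_t^m)=U_t^m$. For any $\mathbf{q}$ with $\mathbf{p}_t\cdot\mathbf{q}\le\mathbf{p}_t\cdot\mathbf{q}_t$, choosing index $t$ in each minimum and summing over $m$ gives
\[
u(\mathbf{q})\le\sum_m U_t^m+\lambda_t\mathbf{p}_t\cdot(\mathbf{q}-\mathbf{q}_t)\le u(\mathbf{q}_t),
\]
so each observed bundle is optimal. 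Two-stage budgeting then follows from additivity. Let $x_t^m=\mathbf{p}_t^m\cdot\mathbf{q}_t^m$. Within each group, $\mathbf{q}_t^m$ maximises $v^m$ subject to $\mathbf{p}_t^m\cdot\mathbf{q}^m\le x_t^m$; this is the same Afriat argument restricted to group $m$. At the upper stage, $(x_t^1,\dots,x_t^M)$ maximises $\sum_m V^m(\mathbf{p}_t^m,x^m)$ subject to $\sum_m x^m\le x_t$, where $V^m$ is the group indirect utility; this holds because global optimality of $\mathbf{q}_t$ is inherited by the group totals.

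\emph{Main obstacle.} The hard step is the clause ``sub-utilities of the Gorman Polar Form''. The Afriat construction produces piecewise-linear $v^m$, which are not obviously of the form $V^m=(x^m-a^m(\mathbf{p}^m))/b^m(\mathbf{p}^m)$. Gorman's role for this form is to let the upper stage depend only on group price indices rather than on full price vectors. With finitely many observations, however, the upper-stage allocation only has to be consistent at the observed points.

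I would first try to show that the common-$\lambda_t$ condition is exactly what makes the upper stage rationalisable with index functions $a^m$ and $b^m$ defined at the observed prices. One candidate is $b^m(\mathbf{p}_t^m)=1/\lambda_t$, with $a^m$ chosen so that $V^m(\mathbf{p}_t^m,x_t^m)=U_t^m$; the task is then to extend $a^m,b^m$ to all prices as concave, linearly homogeneous functions. If that extension fails, the fallback is to argue that any finite data set rationalised by an additive concave utility can be equally rationalised by one whose groups are locally of Gorman polar form at the observed points. The restriction would then be observationally vacuous beyond additivity, and it is the additive structure with a common $\lambda_t$ that carries all the testable content.
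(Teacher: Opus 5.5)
Your necessity and sufficiency arguments are, in substance, the paper's proof. The paper simply invokes Varian (1983, Theorem 6), and that additive-separability characterisation is established exactly as you do it: the single budget constraint yields one multiplier $\lambda_t$, the separable Lagrangian splits into group-wise Afriat inequalities, and sufficiency sums the group-wise minima of affine functions with slopes $\lambda_t\mathbf{p}_t^m$. Your explicit check of the two stages is a useful addition that the paper leaves implicit. Reading the inequality in the standard Afriat direction is also right. One small slip: the printed version clashes with monotonicity when $\mathbf{q}_s^m\ll\mathbf{q}_t^m$, not when $\mathbf{q}_s^m\gg\mathbf{q}_t^m$.

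The step you leave open, the Gorman-polar-form clause, is a real gap. The paper does not close it either, since its proof consists only of that citation, so you are not missing an idea the paper supplies. However, neither of your proposed closures works as stated.

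\emph{The candidate.} Setting $b^m(\mathbf{p}_t^m)=1/\lambda_t$ requires, for every $m$ at once, that $\lambda_t$ depend on $\mathbf{p}_t^m$ alone and be homogeneous of degree $-1$ in it. Condition 2 does not deliver this.

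\emph{The fallback.} Under the literal reading, each additive component has indirect utility $(x^m-a^m(\mathbf{p}^m))/b^m(\mathbf{p}^m)$. The marginal utility of group-$m$ expenditure is then $1/b^m(\mathbf{p}^m)$, so interior upper-stage allocations force it to be common across groups, and the clause is not vacuous. Take one good per group. Then $a^m$ and $b^m$ are linear in the scalar price, so each $v^m$ is affine and $u$ is linear. The data $\mathbf{p}_1=(1,1)$, $\mathbf{q}_1=(1,1)$, $\mathbf{p}_2=(1,2)$, $\mathbf{q}_2=(2,1)$ satisfy condition 2 with $\lambda_1=\lambda_2=1$, $U_1^1=0$, $U_2^1=1$, $U_1^2=U_2^2=0$. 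Yet no linear utility rationalises interior choices at two different relative prices. Equivalently, group 1's price is unchanged while group 2's doubles, so $b^1$ and $b^2$ cannot be equal at both observations.

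So your fallback can hold only under a looser reading of the clause. Possible readings are sub-preferences that are quasi-homothetic up to a monotone transformation, a generalised polar form, or mere shorthand for additive two-stage budgeting. You would have to fix that reading before this step can be proved.
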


This is the additive benchmark, in which a common outer shadow price governs the stage-level allocations, yielding a simpler but still fully rational multi-stage budgeting structure. Both of these results link directly to the previous proposition regarding boundedly rational separable mental accounting.

\begin{cor}
If the data satisfy the condition for two-stage budgeting they necessarily satisfy the conditions for separable accounts.
\end{cor}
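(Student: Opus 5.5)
The idea is to show that the conditions of Proposition 4 and of Proposition 5 each imply condition 2 of Proposition 2 for the partition defined by the $M$ groups. Proposition 2 then delivers separable-accounts rationalisability.

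The natural partition is the one underlying the two-stage budgeting structure: the $M$ groups $\mathbf{q}^{1},\dots,\mathbf{q}^{M}$ whose sub-utilities enter the upper-level function. For each of the two benchmarks I will produce numbers $\{\tilde U_t^m,\tilde\lambda_t^m\}$ that satisfy, for every $m,s,t$,
\[
\tilde U_{s}^{m}\leq \tilde U_{t}^{m}+\tilde\lambda_{t}^{m}\mathbf{p}_{t}^{m}\cdot(\mathbf{q}_{s}^{m}-\mathbf{q}_{t}^{m}),\qquad \tilde\lambda_t^m>0 .
\]

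\emph{Additive case (Proposition 5).} Take $\tilde U_t^m=U_t^m$ and $\tilde\lambda_t^m=\lambda_t$, so each group inherits the common multiplier. The Proposition 5 inequalities are then exactly the Proposition 2 inequalities, once we check that the roles of $s$ and $t$ match. Read as stated, the Proposition 5 inequality has the form $U_t\le U_s+\lambda_t p_t(q_s-q_t)$, with the indices permuted relative to the Afriat form. I will interpret it as the intended Afriat inequality $U_s\le U_t+\lambda_t p_t(q_s-q_t)$, which is how the paper writes it in Propositions 1--3. Under that reading the implication is immediate. Positivity of $\lambda_t$ gives positivity of $\tilde\lambda_t^m$.

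\emph{Weakly separable case (Proposition 4).} Here the multipliers already vary by group, so set $\tilde U_t^m=U_t^m$ and $\tilde\lambda_t^m=\lambda_t^m$. The first block of inequalities in Proposition 4, under the same Afriat reading, is precisely the within-account condition of Proposition 2. The extra restrictions of Proposition 4 can simply be discarded, since Proposition 2 does not need them. These are the homotheticity normalisation $U_t^m=\lambda_t^m\mathbf{p}_t^m\cdot\mathbf{q}_t^m$ and the upper-stage GARP condition on $\{U_t^m,1/\lambda_t^m\}$.

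An alternative route avoids the index question altogether. I would argue that each within-group data set $\{(\mathbf{p}_t^m,\mathbf{q}_t^m)\}$ satisfies GARP and then invoke condition 3 of Proposition 2. Under two-stage budgeting the second-stage choice maximises the sub-utility $v^m$ subject to $\mathbf{p}_t^m\cdot\mathbf{q}^m\le x_t^m$, so $v^m$ rationalises group $m$'s data and Afriat's theorem gives GARP for that group. This is the cleaner argument and is the one I would present. It works from statement 1 of Propositions 4 and 5 (the existence of a two-stage rationalisation) rather than from the inequalities.

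The main obstacle is therefore only the bookkeeping on the orientation of the inequalities and on which statement of Propositions 4 and 5 one starts from. The substantive content is simply that two-stage budgeting adds restrictions on top of the separable-accounts model and removes none.
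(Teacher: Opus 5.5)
Your proposal is correct, and your first route is the paper's intended argument. The paper gives no separate proof of this corollary, but, as with Corollary 3, it is meant to follow by inspection: the first block of inequalities in Propositions 4 and 5 (with the group multipliers $\lambda_t^m$, or the common $\lambda_t$ assigned to every group) is exactly the system in Proposition 2 for the same partition, and the homotheticity normalisation and upper-stage GARP restriction are simply discarded. The orientation issue you flag is harmless even on a literal reading, because replacing $U_t^m$ by $-U_t^m$ turns $U_t^m\leq U_s^m+\lambda_t^m\mathbf{p}_t^m\cdot(\mathbf{q}_s^m-\mathbf{q}_t^m)$ into $-U_s^m\leq -U_t^m+\lambda_t^m\mathbf{p}_t^m\cdot(\mathbf{q}_s^m-\mathbf{q}_t^m)$, which is precisely the Afriat form Proposition 2 requires; so no reinterpretation is needed, and your alternative via within-group GARP, while valid, is not required.
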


The comparison with Propositions 1 and 2 separates two reasons why account-level behaviour may look rational. Under boundedly rational mental accounting, the allocation of expenditure to accounts is taken as given from the consumer's point of view. Under rational two-stage budgeting, that allocation must itself be rationalisable as part of the consumer's overall optimisation problem. The additional upper-stage restrictions in Propositions 4 and 5 are therefore what distinguish genuine two-stage rationality from merely account-by-account consistency.

\section{Implementation}

Whilst the empirical implementation of the labelling model is straightforward given observability of the labelled budget, the Pure Mental Accounting, Separable Accounts and the two versions of Rational Accounting share the fact that the grouping of goods is latent.  The structure of the empirical problem in all of these cases has two parts: checking model-specific feasibility for a given partition, and searching over partitions. The latter is finite but combinatorially large, since the number of partitions grows very fast with the number of goods.

For relatively small problems exhaustive enumeration of the partitions is computationally feasible.\footnote{Exactly what constitutes a `small problem' will depend on the computing power at hand, but to illustrate it may be useful to note that for 10, 11 and 12 goods respectively, the number of partitions is given by the 10th, 11th and 12th Bell Numbers which are 115,975, 678,570 and 4,213,597 respectively.}  In such cases, each candidate partition may be checked using the relevant model-specific feasibility test, and among those that satisfy the feasibility, one may select the partition with the smallest Selten Area, a criterion which measures the restrictiveness of the feasibility condition; see \cite{selten1991properties} and \cite{beattyHowDemandingRevealed2011}. 

\begin{algorithm}[H]
\caption{Greedy local search over latent partitions}
\label{alg:greedy_partition_search}
\begin{algorithmic}[1]
\Require Data $\{(\mathbf{p}_t,\mathbf{q}_t)\}_{t=1,...,T}$, initial partition $P_0$, model index $j \in \{1,2,4,5\}$, feasibility test $F_j(\cdot)$, selection criterion $SA(\cdot)$
\State Set $P \gets P_0$
\State Compute $SA(P)$ if $F_j(P)=1$; otherwise set $SA(P)=+\infty$
\Repeat
    \State Generate the neighbourhood $\mathcal{N}(P)$ of $P$ consisting of all partitions obtained by a single-good reassignment move, followed by relabelling
    \State Test each $P' \in \mathcal{N}(P)$ using the model-specific feasibility test $F_j(P')$
    \State Let   $P^\ast \in \arg\min \{ SA(P') : P' \in \mathcal{N}(P),\ F_j(P')=1 \}$
    \If{there exists a feasible $P^\ast$ with $SA(P^\ast) < SA(P)$}
        \State Set $P \gets P^\ast$
        \State Update $SA(P)$
    \Else
        \State Stop
    \EndIf
\Until{no feasible neighbouring partition delivers a strict improvement in $SA(\cdot)$}
\State \Return $P$, together with the associated feasibility certificate and $SA(P)$
\end{algorithmic}
\end{algorithm}

For larger problems brute force is not feasible. A practical heuristic is to implement a greedy local search over partitions. Beginning from an initial partition, generate the neighbourhood defined by single-good reassignment moves and relabelling,\footnote{By \textit{relabelling} we mean that account labels are purely notational and are replaced by a unique standard numbering, so that the same partition is not counted multiple times under different labels. For example, the membership vectors (2, 2, 1, 1, 3) and (1, 1, 2, 2, 3) represent the same partition of five goods into three accounts; this is \{{1, 2\}, \{3, 4\}, \{5\}}, or equivalently \{{3, 4\}, \{1, 2\}, \{5\}}, since block order is immaterial. After relabelling, both are written in the same form (1, 1, 2, 2, 3). } test each neighbour using the relevant model-specific feasibility condition, and move to the feasible neighbour with the lowest Selten Area (denoted $SA(\cdot)$ in the algorithm) whenever this improves on the current partition. The process is iterated until no feasible neighbouring partition delivers a strict improvement. Algorithm 1 provides suitable pseudo-code. This search template applies to Propositions 1, 2, 4 and 5; only the feasibility test $F_j(\cdot)$ differs across models. 

As well as being a practical option when there are many goods, the benefit of this procedure is that by seeding the search with an economically sensible starting partition it will help avoid the problem of searching over nonsense partitions which may satisfy the conditions but be economically implausible (for example, partitions which place chalk and cheese in the same group). We emphasise that this procedure is intended purely as a heuristic computational aid; the theoretical characterisations of each accounting model in the paper do not depend on it.

\section{Conclusions}

We have characterised three boundedly rational models of mental accounting ---pure accounts, separable accounts, and labelled income---together with two rational benchmarking models of multi-stage budgeting. The common feature of these characterisations is that each model has a finite set of observable revealed-preference implications, so that mental accounting can be treated as a refutable economic hypothesis rather than a purely descriptive idea. The paper therefore connects the behavioural literature on mental accounting with the nonparametric demand literature in a way that is deliberately narrow, but formally precise. In that sense, the contribution is not a general theory of mental accounting (which is a very broad church), but a set of testable characterisations for some specific economically useful versions of it.

\printbibliography

\pagebreak

\section*{Appendix - Proofs}

\section*{Proof of Proposition 1.}

Necessity. Assuming differentiability (alternatively the following arguments can be established using super-gradients) the first order conditions are 
\[
\nabla u(\mathbf{q}_{t}^{m})\leq \lambda_{t}^{m}\mathbf{p}_{t}^{m}\,\,\,\,\,\forall m,t
\]
Concavity implies 
\[
u(\mathbf{q}_{s}^{1},...,\mathbf{q}_{s}^{m})\leq u(\mathbf{q}_{t}^{1},...,\mathbf{q}_{t}^{m})+\sum_{m=1}^{m=M}\nabla u(
\mathbf{q}_{t}^{m})\cdot(
\mathbf{q}_{s}^{m}-\mathbf{q}_{t}^{m})
\]
and substituting in the first order conditions gives

\[
u(\mathbf{q}_{s}^{1},...,\mathbf{q}_{s}^{m})\leq u(\mathbf{q}_{t}^{1},...,%
\mathbf{q}_{t}^{m})+\sum_{m=1}^{m=M}\lambda _{t}^{m}\mathbf{p}_{t}^{m}\cdot (\mathbf{q}_{s}^{m}-
\mathbf{q}_{t}^{m})
\]
Since $u$ is real-valued there exist real numbers $\{U_{t},\lambda_{t}^{1},...,\lambda_{t}^{M}\}_{t=1,...,T}$ and a partition $\mathbf{q}_{t}=[\mathbf{q}_{t}^{1},...,\mathbf{q}_{t}^{M}]$ such that 

\begin{eqnarray*}
U_{s} &\leq &U_{t}+\sum_{m=1}^{m=M}\lambda _{t}^{m}\mathbf{p}_{t}^{m}\cdot(\mathbf{q}_{s}^{m}-\mathbf{q}_{t}^{m})\,\,\,\,\forall s,t \\
\lambda _{t}^{m} &>&0\,\,\,\,\,\forall m,t
\end{eqnarray*}
Sufficiency. Now suppose that there exist numbers and a partition which satisfy these inequalities. Define a utility function $U(\mathbf{q})$ by 
\begin{equation*}
U(\mathbf{q})=\min_{i}\left\{U_{i}+\sum_{m=1}^{m=M}\lambda _{i}^{m}\mathbf{p}%
_{i}^{m}\cdot(\mathbf{q}^{m}-\mathbf{q}_{i}^{m})\right\}
\end{equation*}

Note that this is piece-wise linear, monotonic, continuous and concave.
Consider $\mathbf{q}_{t}$ and the partition $\mathbf{q}_{t}=[\mathbf{q}%
_{t}^{1},...,\mathbf{q}_{t}^{M}]$ 
\begin{equation*}
U(\mathbf{q}_{t})=\min_{i}\left\{U_{i}+\sum_{m=1}^{m=M}\lambda _{i}^{m}\mathbf{p}%
_{i}^{m}\cdot(\mathbf{q}_{t}^{m}-\mathbf{q}_{i}^{m})\right\}
\end{equation*}
When $i=t$ then $U_{i}+\sum_{m=1}^{m=M}\lambda _{i}^{m}\mathbf{p}%
_{i}^{m}\cdot(\mathbf{q}_{t}^{m}-\mathbf{q}_{i}^{m})=U_{t}$ and we have (by construction) 

\begin{equation*}
U_{t}\leq U_{i}+\sum_{m=1}^{m=M}\lambda _{i}^{m}\mathbf{p}_{i}^{m}\cdot(\mathbf{q}_{t}^{m}-\mathbf{q}_{i}^{m})\,\,\,\,\forall i
\end{equation*}
Hence $U(\mathbf{q}_{t})=U_{t}$. Now consider some arbitrary $\mathbf{q}$ and the partition $\mathbf{q}=[\mathbf{q}^{1},...,\mathbf{q}^{M}]$ such that $\mathbf{p}_{t}^{m}\cdot\mathbf{q}^{m}\leq x_{t}^{m}$ for all $m.$ We have 
\begin{equation*}
U(\mathbf{q})=\min_{i}\left\{U_{i}+\sum_{m=1}^{m=M}\lambda _{i}^{m}\mathbf{p}_{i}^{m}\cdot(\mathbf{q}^{m}-\mathbf{q}_{i}^{m})\right\}
\end{equation*}
We have 
\begin{equation*}
\mathbf{p}_{t}^{m}\cdot(\mathbf{q}^{m}-\mathbf{q}_{t}^{m})\leq 0\,\,\,\forall m
\end{equation*}
and hence since $\lambda _{t}^{m}>0$ we have 
\begin{equation*}
\lambda _{t}^{m}\mathbf{p}_{t}^{m}\cdot(\mathbf{q}^{m}-\mathbf{q}_{t}^{m})\leq
0\,\,\,\forall m
\end{equation*}
and 
\begin{equation*}
\sum_{m=1}^{m=M}\lambda _{t}^{m}\mathbf{p}_{t}^{m}\cdot(\mathbf{q}^{m}-\mathbf{q}_{t}^{m})\leq 0
\end{equation*}
Thus 
\begin{equation*}
U_{t}\geq U_{t}+\sum_{m=1}^{m=M}\lambda _{t}^{m}\mathbf{p}_{t}^{m}\cdot(
\mathbf{q}^{m}-\mathbf{q}_{t}^{m})
\end{equation*}
and hence 
\begin{equation*}
U_{t}\geq U_{t}+\sum_{m=1}^{m=M}\lambda _{t}^{m}\mathbf{p}_{t}^{m}\cdot(
\mathbf{q}^{m}-\mathbf{q}_{t}^{m})\geq
\min_{i}\left\{U_{i}+\sum_{m=1}^{m=M}\lambda _{i}^{m}\mathbf{p}_{i}^{m}\cdot(\mathbf{q}^{m}-\mathbf{q}_{i}^{m})\right\}=U(\mathbf{q})
\end{equation*}
Thus $U(\mathbf{q}_{t})=U_{t}\geq U(\mathbf{q}).$$\blacksquare $

\section*{Proof of Corollary 1}

The condition in Proposition 1 reduces to 
\begin{equation*}
U_{t}\geq U_{t}+\lambda_{t}^{1}\mathbf{p}_{t}^{1}\cdot(
\mathbf{q}_{s}^{1}-\mathbf{q}_{t}^{1})
\end{equation*}
with $\mathbf{p}_{t}^{1}=\mathbf{p}_{t}$and $\mathbf{q}_{t}^{1}=\mathbf{q}%
_{t}$ for all $t$. By Afriat's Theorem this is equivalent to GARP. $%
\blacksquare$

\section*{Proof of Corollary 2}

With a single good, each in its own account, the condition in Proposition 1 
\begin{equation*}
U_{s}\leq U_{t}+\sum_{m=1}^{m=M}\lambda_{t}^{m}p_{t}^{m}\cdot(\mathbf{{q}}_{s}^{m}-\mathbf{q}_{t}^{m})
\end{equation*}
can be written as 
\begin{equation*}
U_{s}\leq U_{t}+\mu_{t}\boldsymbol{\pi}_{t}\cdot(\mathbf{q}_{s}-\mathbf{q}_{t})
\end{equation*}
where 
\begin{equation*}
\lambda_{t}^{m}p_{t}^{m}=\pi_{t}^{m}
\end{equation*}
This is is equivalent to the requirement that there exist some virtual prices $\{\boldsymbol{\pi}_{t}\}_{t=1,...,T}$ such that the data $\{(\boldsymbol{\pi}_{t},\mathbf{q}_{t})\}_{t=1,...,T}$ satisfies GARP. Setting all virtual prices to one suffices to show that such prices always exist.$\blacksquare$

\section*{Proof of Proposition 2}

Weak separability is necessary and sufficient for the second (lower) stage of two-stage budgeting. This implies GARP within each separable group (\cite{varian1983non}, Theorem 3) and hence the existence of sub-utility functions
which rationalise the model 
\begin{equation*}
max_{\mathbf{q}}v^{m}(\mathbf{q}^{m})\;\mbox{subject to}.\;\mathbf{p}%
_{t}^{m}\cdot\mathbf{q}^{m}\leq x_{t}^{m}
\end{equation*}
Summing the corresponding Afriat Inequalities 
\begin{equation*}
U_{s}^{m}\leq U_{t}^{m}+\lambda_{t}^{m}\mathbf{p}_{t}^{m}\cdot(=
\mathbf{q}_{s}^{m}-\mathbf{q}_{t}^{m})
\end{equation*}
over the groups gives 
\begin{equation*}
\sum_{m=1}^{M}U_{s}^{m}\leq\sum_{m=1}^{M}U_{t}^{m}+\sum_{m=1}^{M}\lambda_{t}^{m}\mathbf{p}_{t}^{m}\cdot(\mathbf{q}_{s}^{m}-\mathbf{q}_{t}^{m})
\end{equation*}
or, setting $U_{t}=\sum_{m=1}^{M}U_{t}^{m}$, we can write this as 
\begin{equation*}
U_{s}\leq U_{t}+\sum_{m=1}^{M}\lambda_{t}^{m}\mathbf{p}_{t}^{m}\cdot(%
\mathbf{q}_{s}^{m}-\mathbf{q}_{t}^{m})
\end{equation*}
The rest of the proof parallels Proposition 1.$\blacksquare$

\section*{Proof of Corollary 3}

Immediate from inspection of the conditions in Propositions 1 and 2.$%
\blacksquare$

\section*{Proof of Proposition 3.}

Necessity. Arrange the commodity vector so that
\[
\mathbf{q}=\left[\begin{array}{c}
\mathbf{q}^{A}\\
\mathbf{q}^{B}
\end{array}\right]
\]
The consumer's optimisation problem is
\[
\max_{\mathbf{q}}U(\mathbf{q})\text{ subject to }\mathbf{p}_{t}\cdot\mathbf{q}\le x_{t};\mathbf{p}_{t}^{A}\cdot\mathbf{q}^{A}\ge x_{t}^{A}
\]
Assuming differentiability (alternatively the following arguments
can be established using super-gradients) the first order conditions
are 
\[
\left[\begin{array}{c}
\nabla_{A}U(\mathbf{q}_{t})\\
\nabla_{B}U(\mathbf{q}_{t})
\end{array}\right]\le\left[\begin{array}{c}
\lambda_{t}\mathbf{p}_{t}^{A}+\mu_{t}\mathbf{p}_{t}^{A}\\
\lambda_{t}\mathbf{p}_{t}^{B}
\end{array}\right]
\]
where $\lambda_{t}>0$ and $\mu_{t}\ge0$ with equality when the spending
constraint on the labelled goods does not bind ($\mathbf{p}_{t}^{A}\cdot\mathbf{q}^{A}>x_{t}^{A})$. 

\noindent Concavity of $u$ implies
\[
u(\mathbf{q}_{s})\le u(\mathbf{q}_{t})+\nabla u(\mathbf{q}_{t})\cdot(\mathbf{q}_{s}-\mathbf{q}_{t})
\]
 Substituting in the first order conditions preserves the inequality
giving
\[
u(\mathbf{q}_{s})\le u(\mathbf{q}_{t})+\left(\lambda_{t}\mathbf{p}_{t}^{A}+\mu_{t}\mathbf{p}_{t}^{A}\right)\cdot(\mathbf{q}_{s}^{A}-\mathbf{q}_{t}^{A})+\left(\lambda_{t}\mathbf{p}_{t}^{B}\right)\cdot(\mathbf{q}_{s}^{B}-\mathbf{q}_{t}^{B})
\]
or
\begin{align*}
u(\mathbf{q}_{s}) & \le u(\mathbf{q}_{t})+\lambda_{t}\mathbf{p}_{t}^{A}\cdot(\mathbf{q}_{s}^{A}-\mathbf{q}_{t}^{A})+\mu_{t}\mathbf{p}_{t}^{A}\cdot(\mathbf{q}_{s}^{A}-\mathbf{q}_{t}^{A})+\lambda_{t}\mathbf{p}_{t}^{B}\cdot(\mathbf{q}_{s}^{B}-\mathbf{q}_{t}^{B})\\
u(\mathbf{q}_{s}) & \le u(\mathbf{q}_{t})+\lambda_{t}\mathbf{p}_{t}^{A}\cdot(\mathbf{q}_{s}^{A}-\mathbf{q}_{t}^{A})+\lambda_{t}\mathbf{p}_{t}^{B}\cdot(\mathbf{q}_{s}^{B}-\mathbf{q}_{t}^{B})+\mu_{t}\mathbf{p}_{t}^{A}\cdot(\mathbf{q}_{s}^{A}-\mathbf{q}_{t}^{A})\\
u(\mathbf{q}_{s}) & \le u(\mathbf{q}_{t})+\lambda_{t}\mathbf{p}_{t}\cdot(\mathbf{q}_{s}-\mathbf{q}_{t})+\mu_{t}\mathbf{p}_{t}^{A}\cdot(\mathbf{q}_{s}^{A}-\mathbf{q}_{t}^{A})
\end{align*}
Finally since $u$ is real valued there must exists real numbers $\left\{ U_{t},\lambda_{t},\mu_{t}\right\} _{t=1,...,T}$such
that
\begin{align*}
U_{s} & \leq U_{t}+\lambda_{t}\mathbf{p}_{t}\cdot(\mathbf{q}_{s}-\mathbf{q}_{t})+\mu_{t}\mathbf{p}_{t}^{A}\cdot(\mathbf{q}_{s}^{A}-\mathbf{q}_{t}^{A})\\
\lambda_{t} & >0\\
\mu_{t} & \ge0\text{ with equality when \ensuremath{\mathbf{p}_{t}^{A}\cdot\mathbf{q}_{t}^{A}>x_{t}^{A}}}
\end{align*}
Sufficiency. Now suppose that there exist numbers which satisfy these
inequalities. Define a utility function $U(\mathbf{q})$ by 
\[
U(\mathbf{q})=\min_{i\in\{1,...,T\}}\left\{ U_{i}+\left(\lambda_{i}\mathbf{p}_{i}^{A}+\mu_{i}\mathbf{p}_{i}^{A}\right)\cdot(\mathbf{q}^{A}-\mathbf{q}_{i}^{A})+\lambda_{i}\mathbf{p}_{i}^{B}\cdot(\mathbf{q}^{B}-\mathbf{q}_{i}^{B})\right\} 
\]
Note that this is piece-wise linear, monotonic, continuous and concave.
Consider $\mathbf{q}_{t}.$ Using the utility function above
\[
U(\mathbf{q}_{t})=\min_{i\in\{1,...,T\}}\left\{ U_{i}+\left(\lambda_{i}\mathbf{p}_{i}^{A}+\mu_{i}\mathbf{p}_{i}^{A}\right)\cdot(\mathbf{q}_{t}^{A}-\mathbf{q}_{i}^{A})+\lambda_{i}\mathbf{p}_{i}^{B}\cdot(\mathbf{q}_{t}^{B}-\mathbf{q}_{i}^{B})\right\} 
\]
By construction
\[
U_{t}\le U_{i}+\left(\lambda_{i}\mathbf{p}_{i}^{A}+\mu_{i}\mathbf{p}_{i}^{A}\right)\cdot(\mathbf{q}_{t}^{A}-\mathbf{q}_{i}^{A})+\lambda_{i}\mathbf{p}_{i}^{B}\cdot(\mathbf{q}_{t}^{B}-\mathbf{q}_{i}^{B})\;\forall i\in\{1,...,T\}
\]
and hence the bundle $\mathbf{q}_{t}$ is assigned the value
\[
U(\mathbf{q}_{t})=U_{t}
\]
Now consider an arbitrary bundle $\mathbf{q}$such that $\mathbf{p}_{t}\cdot\mathbf{q}_{t}\ge\mathbf{p}_{t}\cdot\mathbf{q}$.
We need to show that $U(\mathbf{q}_{t})\ge U(\mathbf{q})$.

\noindent The utility function assigns
\[
U(\mathbf{q})=\min_{i\in\{1,...,T\}}\left\{ U_{i}+\left(\lambda_{i}\mathbf{p}_{i}^{A}+\mu_{i}\mathbf{p}_{i}^{A}\right)\cdot(\mathbf{q}^{A}-\mathbf{q}_{i}^{A})+\lambda_{i}\mathbf{p}_{i}^{B}\cdot(\mathbf{q}^{B}-\mathbf{q}_{i}^{B})\right\} 
\]
or
\[
U(\mathbf{q})=\min_{i\in\{1,...,T\}}\left\{ U_{i}+\lambda_{i}\mathbf{p}_{i}\cdot(\mathbf{q}-\mathbf{q}_{i})+\mu_{i}\mathbf{p}_{i}^{A}\cdot(\mathbf{q}^{A}-\mathbf{q}_{i}^{A})\right\} 
\]
Consider the element where $i=t:$
\[
U_{t}+\lambda_{t}\mathbf{p}_{t}\cdot(\mathbf{q}-\mathbf{q}_{t})+\mu_{t}\mathbf{p}_{t}^{A}\cdot(\mathbf{q}^{A}-\mathbf{q}_{t}^{A})
\]
Because $\mathbf{p}_{t}\cdot\mathbf{q}_{t}\ge\mathbf{p}_{t}\cdot\mathbf{q}$,
\[
\lambda_{t}\mathbf{p}_{t}\cdot(\mathbf{q}-\mathbf{q}_{t})\le0
\]
Now consider two cases. Suppose first that $\mathbf{p}_{t}^{A}\cdot\mathbf{q}_{t}^{A}>x_{t}^{A}.$In
this case the labelling constraint does not bind and $\mu_{t}=0.$
Hence
\[
U_{t}\ge U_{t}+\underset{\le0}{\underbrace{\lambda_{t}\mathbf{p}_{t}\cdot(\mathbf{q}-\mathbf{q}_{t})}}+\underset{=0}{\underbrace{\mu_{t}}}\mathbf{p}_{t}^{A}\cdot(\mathbf{q}^{A}-\mathbf{q}_{t}^{A})
\]
 and thus $U(\mathbf{q}_{t})\ge U(\mathbf{q})$. 

\noindent Now consider the case in which the labelling constraint
binds and $\mu_{t}>0$. In that case $\mathbf{p}_{t}^{A}\cdot\mathbf{q}^{A}=x^{A_{t}}$
and 
\[
U_{t}\ge U_{t}+\underset{\le0}{\underbrace{\lambda_{t}\mathbf{p}_{t}\cdot(\mathbf{q}-\mathbf{q}_{t})}}+\underset{=0}{\mu_{t}\underbrace{\mathbf{p}_{t}^{A}\cdot(\mathbf{q}^{A}-\mathbf{q}_{t}^{A})}}
\]
 and thus $U(\mathbf{q}_{t})\ge U(\mathbf{q})$. $\blacksquare$

\section*{Proof of Proposition 4}

Implied by \cite{DiewertParkan1985}  and \cite{varian1983non}, Theorem 5. $\blacksquare$

\section*{Proof of Proposition 5}

Implied by \cite{varian1983non} Theorem 6.$%
\blacksquare$

\end{document}